\newcommand{\Nm}[1]{{||#1||}}
\def\tcr#1{\textcolor{red}{#1}}  
\def\tcb#1{\textcolor{blue}{#1}}
\newtheorem{appli}[example]{Application}
\def\shuffle{\mathop{_{^{\sqcup\!\sqcup}}}}
\def\adots{\mathinner{\mkern2mu\raise1pt\hbox{.}
\mkern3mu\raise4pt\hbox{.}\mkern1mu\raise7pt\hbox{.}}}
\def\pointir{\unskip . -- \ignorespaces}
\def\up#1{\raise 1ex\hbox{\footnotesize#1}}
\def\supp{\mathop\mathrm{supp}\nolimits}
\def\span{\mathop\mathrm{span}\nolimits}
\def\ra{\rightarrow}
\def\lra{\leftrightarrow}
\def\Lra{\Longrightarrow}
\def\path{\rightsquigarrow}
\def\C{{\mathbb C}}
\def\R{{\mathbb R}}
\def\Q{{\mathbb Q}}
\def\S{{\Sigma}}
\def\Q{{\mathbb Q}}
\def\calF{{\mathcal F}}
\def\calX{{\mathcal X}}
\def\scal#1#2{\langle #1 | #2 \rangle}
\def\ncs#1#2{#1\langle\langle #2\rangle\rangle}
\def\ncp#1#2{#1\langle #2\rangle}
\def\dext#1#2{#1\{ \!\{ #2\} \!\}}
\def\dd{\mathbf{d}}
\newcommand\rsetminus{\mathbin{\mathpalette\rsetminusaux\relax}}
\newcommand\rsetminusaux[2]{\mspace{-4mu}
  \raisebox{\rsmraise{#1}\depth}{\rotatebox[origin=c]{-20}{$#1\smallsetminus$}}
 \mspace{-4mu}
}
\newcommand\rsmraise[1]{%
  \ifx#1\displaystyle .8\else
    \ifx#1\textstyle .8\else
      \ifx#1\scriptstyle .6\else
        .45%
      \fi
    \fi
  \fi}
\def\2#1{\ifnum#1<10 0\fi\the#1}
\xdef\isodayandtime{
{\2\day-\2\month-\the\year\space\2{\count0}:%
\2{\count2}}}
\def\QY{\poly{\Q}{Y}}
\def\QY_0{\Q\left\langle{Y_0}\right\rangle}
\newcommand{\calA}{{\mathcal A}}
\newcommand{\calM}{{\mathcal M}}
\newcommand{\calH}{{\mathcal H}}
\newcommand{\calC}{{\mathcal C}}
\newcommand{\calD}{{\mathcal D}}
\newcommand{\N}{{\mathbb N}}
 \def\shuffle{\mathop{_{^{\sqcup\!\sqcup}}}}
\newcommand{\Li}{\operatorname{Li}}
\newcommand{\ad}{\operatorname{ad}}
\def\deg{\mathrm{deg}}
\newcommand{\poly}[2]{#1 \langle #2 \rangle}
\def\QY{\poly{\Q}{Y}}
\def\QY_0{\Q\left\langle{Y_0}\right\rangle}
\newcommand{\calT}{\mathcal{T}}
 \newcommand{\calB}{{\cal B}}
\def\path{\rightsquigarrow}
\def\bv{\mid}
\def\deg{\mathop\mathrm{deg}\nolimits}
\def\supp{\mathop\mathrm{supp}\nolimits}
\def\binom#1#2{{#1\choose#2}}
\def\tcr#1{\textcolor{red}{#1}}  
\def\tcb#1{\textcolor{blue}{#1}} 
\def\pointir{\unskip . -- \ignorespaces} 
\def\ra{\rightarrow}
\def\lra{\longrightarrow}
\def\supp{\mathrm{supp}}
\def\scal#1#2{\langle #1\bv#2 \rangle}
\def\ncp#1#2{#1\langle #2\rangle}
\def\ncs#1#2{#1\langle \!\langle #2\rangle \!\rangle}
\newtcolorbox{tcbstrikeout}{breakable,
 enhanced jigsaw,
 opacityback=0,
 parbox=false,
 boxrule=0mm,
 top=0mm,bottom=0pt,left=0pt,right=0pt,
 boxsep=0pt,
 frame hidden,
 finish={\fill[pattern=mystrikeout] (frame.north west) rectangle (frame.south east);}
}
\newtcbox{\lbox}[1][]{on line, sharp corners, colback=white, 
colframe=black, size=small, leftrule=0pt,#1} 
\newtcbox{\rbox}[1][]{on line, sharp corners, colback=white, 
colframe=black, size=small, rightrule=0pt,#1} 
\newtcbox{\tbox}[1][]{on line, sharp corners, colback=white, 
colframe=black, size=small, toprule=0pt,#1} 
\newtcbox{\bbox}[1][]{on line, sharp corners, colback=white, 
colframe=black, size=small, bottomrule=0pt,#1} 
\title{A localized version of the basic triangle theorem}
\titlerunning{Lie-theoretic aspects of the BTT}
\author{G.H.E. Duchamp (a), N.P. Gargava (b), V. Hoang Ngoc Minh (c), P. Simonnet (d),}
\institute{(a) University Paris 13, Sorbonne Paris City, 93430 Villetaneuse, France,\\
(b) \'Ecole Polytechnique F\'ed\'erale de Lausanne, CH-1015, Lausanne, Switzerland,\\ 
(c) University of Lille, 1 Place D\'eliot, 59024 Lille, France,\\ 
(d) University of Corsica, 20250 Corte, France.}
\date{\isodayandtime}
\begin{document}
\maketitle
\begin{abstract}
In this paper, we examine some aspects of the BTT (basic triangle theorem), first published in 2011 (see \cite{Linz}).\\
In a first part, we review the interplay between integration of Non Commutative Differential equations and paths drawn on 
Magnus groups and some of their closed subgroups.\\    
In a second part, we provide a localized version of the BTT and aply it to prove the independence of hyperlogarithms over various function algebras. This version provides direct access to rings of scalars and avoids the recourse to fraction fields as that of meromorphic functions for instance.
\end{abstract}
%
\section{Introduction}
\textit{Iterated integrals (Lappo-Danielevskii), K.-T. Chen \cite{KTC,CFL} (path spaces, loops spaces, algebraic topology), Brown, Kontsevich}
\\
In a second step, we will use an analogue of the well-known closed subgroup theorem (also called Cartan theorem for finite dimensional Lie groups) which, in the Banach Lie context can be stated as follows.\\
Let $\calB$ be a Banach algebra (with unit $e$) and $G$ a closed subgroup of the open set $\calB^{\times}$. 
By a path ``drawn on $T$'' ($T\subset \calB$) is understood any function $\varphi:I\to T$ where $I$ is a open real interval.
\\ 
The first step is to establish what would be seen as the Lie algebra of $G$.\\ 
Let $L=L(G)$ be the set of tangent vectors of $G$ at the origin i.e.
\begin{equation}
L(G)=\{\gamma\,'(0)\ |\ \gamma:I\to G \mbox{ is differentiable at $0\ (=0_\R)$ and $\gamma(0)=e$}\}
\end{equation} 
\begin{proposition}[see \cite{MO1}] The set $L(G)$ has the following properties
\begin{enumerate}[1)]
\item\label{p1} If $u\in L(G)$, then, the one-parameter group $t\mapsto e^{t\cdot u}$ is drawn on $G$
\item\label{p2} $L(G)$ is a closed Lie subalgebra of $\calB$
\item\label{p3} Let $g\in \calB$ s.t. $\Nm{g-e}<1$, then 
$$
\log(g)\in L(G) \Longrightarrow g\in G
$$
\end{enumerate}
\end{proposition}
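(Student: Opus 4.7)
The plan is to prove the three assertions in the natural order, getting property~\ref{p1} first since it unlocks a clean description of $L(G)$ that renders the remaining points transparent.

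First I would establish~\ref{p1}. Given $u\in L(G)$ with a witness path $\gamma:I\to G$ satisfying $\gamma(0)=e$ and $\gamma'(0)=u$, I fix $t\in\R$ and consider the discretization $\gamma(t/n)^n$. Using $\gamma(s)=e+su+o(s)$ together with $e^{su}=e+su+O(s^2)$, a telescoping estimate in $\calB$ yields
\[
\Nm{\gamma(t/n)^n - e^{tu}} \longrightarrow 0 \qquad (n\to\infty).
\]
Since $\gamma(t/n)\in G$ for $n$ large and $G$ is stable under products, every $\gamma(t/n)^n$ lies in $G$; closedness of $G$ inside $\calB^\times$ then forces the limit $e^{tu}$ to lie in $G$ for all $t\in\R$.

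With~\ref{p1} available, one has the characterization
\[
L(G)=\{\,u\in\calB\ |\ e^{tu}\in G\text{ for every }t\in\R\,\},
\]
the reverse inclusion using $\gamma(t)=e^{tu}$ as a witness path. From this description the closedness of $L(G)$ as a subset of $\calB$ is immediate: if $u_n\to u$ with each $e^{tu_n}\in G$, then continuity of the exponential and closedness of $G$ give $e^{tu}\in G$ for every $t$. The $\R$-linear part of~\ref{p2} then splits into scalar stability (using $s\mapsto\gamma(\lambda s)$) and additivity (the product path $s\mapsto\gamma_u(s)\gamma_v(s)$ is drawn on $G$, passes through $e$, and has derivative $u+v$ at the origin).

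For the Lie bracket I would invoke the commutator limit formula, valid in any Banach algebra,
\[
e^{[u,v]}=\lim_{n\to\infty}\bigl(e^{u/n}e^{v/n}e^{-u/n}e^{-v/n}\bigr)^{n^2},
\]
applied with $u$ replaced by $tu$ to obtain $e^{t[u,v]}$. By~\ref{p1} each of the four exponentials lies in $G$; their products lie in $G$, and closedness of $G$ gives $e^{t[u,v]}\in G$ for every $t\in\R$, so the characterization above places $[u,v]$ inside $L(G)$. The technical core of the argument sits here: the Banach-norm version of the commutator formula must be set up via Taylor expansions to the correct order, together with uniform bounds on the $n^2$ factors, in order to control the limit. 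Finally,~\ref{p3} is direct: if $\Nm{g-e}<1$, the Mercator-type series $\log g:=\sum_{k\ge 1}(-1)^{k-1}(g-e)^k/k$ converges in $\calB$ and satisfies $\exp(\log g)=g$; assuming $\log g\in L(G)$, property~\ref{p1} specialised to $t=1$ yields $g=\exp(\log g)\in G$.
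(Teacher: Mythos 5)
Your argument is correct, and for items 1) and 3) it is essentially the paper's: the same discretization $\gamma(t/n)^n$ with closedness of $G$ (the paper passes through $n\log\gamma(t/n)\to t\gamma'(0)$ rather than your telescoping estimate $\Nm{\gamma(t/n)^n-e^{tu}}\to 0$, a cosmetic difference), the same closedness and linearity arguments for $L(G)$, and the same one-line deduction of 3) from 1) at $t=1$. Where you genuinely diverge is the Lie bracket. The paper exhibits $[u,v]$ directly as a tangent vector by means of an explicit path: $\gamma_1(t)=e^{r(t)u}e^{r(t)v}e^{-r(t)u}e^{-r(t)v}$ with $r(t)=\sqrt{2t}$ for $t\ge 0$, extended by $\gamma(t)=(\gamma_1(-t))^{-1}$ for $t\le 0$; this only requires a single second-order expansion of the group commutator and differentiability of the spliced path at $0$. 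You instead prove $e^{t[u,v]}\in G$ for all $t$ via the commutator limit formula $\bigl(e^{u/n}e^{v/n}e^{-u/n}e^{-v/n}\bigr)^{n^2}\to e^{[u,v]}$ and then conclude through the characterization $L(G)=\{u\in\calB\ |\ e^{tu}\in G\ \forall t\in\R\}$, which you correctly extract from 1). Both routes are sound; yours keeps everything phrased in terms of one-parameter subgroups and avoids the slightly delicate square-root reparametrization and the one-sided gluing of the paper's path, but at the price of the heavier estimate you yourself flag (uniform control of the $n^2$-fold product), which you only sketch, whereas the paper's path argument needs no iterated-product limit at all.
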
  
\section{BTT theorem}
\subsection{Background}
Notations about alphabets and (noncommutative) series are standard and can be found 
in \cite{berstel}. \\
\textit{Set of variables, series, Dirac-Sch\"utzenberger duality, Magnus and Hausdorff groups. Series with constant and variable coefficients. Differential rings and algebras.}
\subsection{For the Magnus group}
We can always consider a series with variable coefficients $S\in \ncs{\calH(\Omega)}{\calX}$ as a \textit{function} i.e. 
with, for all $z\in \Omega$ 
\begin{equation}
S(z):=\sum_{w\in \calX^*}\scal{S}{w}(z)\, w
\end{equation}
we get an embedding $\ncs{\calH(\Omega)}{\calX}\hookrightarrow \ncs{\C}{\calX}^{\Omega}$. With this point of view in head, we can always consider series $S\in \ncs{\calH(\Omega)}{\calX}$ such that $\scal{S}{1_{\calX^*}}=1_{\calH(\Omega)}$ as (holomorphic) paths \textit{drawn on the Magnus group.}   
The 
Non commutative differential equations with left multiplier can be expressed in the context of general 
differential algebras. 
\begin{theorem}[See Th 1 in \cite{Linz}]\label{ind_lin} Let $(\calA,d)$ be a $k$-commutative associative differential algebra with unit ($ker(d)=k$, a field) and $\calC$ be a differential subfield of $\calA$ 
(i.e. $d(\calC)\subset \calC$ and $k\subset \calC$). 
Let $X$ be some alphabet (i.e. some set) and we define 
$\mathbf{d}:\ncs{\calA}{X}\rightarrow\ncs{\calA}{X} $ to be the map given by 
$\scal{\dd(S)}{w} = d(\scal{S}{w})$.
We suppose that $S\in \ncs{\calA}{X}$ is a solution of the differential equation
\begin{equation}
  \dd(S)=MS\ ;\ \scal{S}{1_{X^{*}}}=1_{\calA}
\end{equation}
where the multiplier $M$ is a homogeneous series (a polynomial in the case of finite $X$) of degree $1$,
i.e.
\begin{equation}
M=\sum_{x\in X}u_x x\in \ncs{\calC}{X}\ .
\end{equation}
The following conditions are equivalent :
\begin{enumerate}[i)]
	\item The family $(\scal{S}{w})_{w\in X^*}$ of coefficients of $S$ is free over $\calC$.
  \item The family of coefficients $(\scal{S}{y})_{y\in X\cup \{1_{X^*}\}}$ is free over $\calC$.
	\item The family $(u_x)_{x\in X}$ is such that, for $f\in \calC$ and $\alpha\in k^{(X)}$ (i.e. 
	$supp(\alpha)$ is finite)
\begin{equation}\label{prim_ind}
d(f)=\sum_{x\in X} \alpha_x u_x \Longrightarrow (\forall x\in X)(\alpha_x=0)\ .
\end{equation}
	\item The family $(u_x)_{x\in X}$ is free over $k$ and
\begin{equation}
	d(\calC)\cap\span_k\Big((u_x)_{x\in X}\Big)=\{0\}\ .
\end{equation}
\end{enumerate}
\end{theorem}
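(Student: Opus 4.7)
The plan is to establish the cycle $i) \Rightarrow ii) \Rightarrow iii) \Leftrightarrow iv) \Rightarrow i)$. The inclusion $X \cup \{1_{X^*}\} \subset X^*$ gives $i) \Rightarrow ii)$ for free, and $iii) \Leftrightarrow iv)$ is direct: $iii)$ implies both $k$-linear independence of $(u_x)_{x \in X}$ (set $f = 0$) and triviality of $d(\calC) \cap \span_k((u_x)_{x \in X})$, while conversely $iv)$ places any relation $d(f) = \sum_x \alpha_x u_x$ into $d(\calC) \cap \span_k((u_x)_{x \in X}) = \{0\}$, whereupon $k$-independence of $(u_x)$ forces $\alpha = 0$. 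The real work therefore lies in $ii) \Rightarrow iii)$ and $iv) \Rightarrow i)$.

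For $ii) \Rightarrow iii)$ I would argue by contrapositive, first noting that the equation $\dd(S) = MS$ yields $d(\scal{S}{x}) = \scal{MS}{x} = u_x$ for every $x \in X$. If $iii)$ fails, one has $f \in \calC$ and $\alpha \in k^{(X)}$ with some $\alpha_{x_0} \neq 0$ and $d(f) = \sum_x \alpha_x u_x$. Because each $\alpha_x$ lies in $\ker d = k$, the element $g := f - \sum_x \alpha_x \scal{S}{x}$ satisfies $d(g) = 0$, hence $g \in k \subset \calC$; rearranging gives $(f - g) \cdot 1_{\calA} - \sum_x \alpha_x \scal{S}{x} = 0$, a non-trivial $\calC$-linear relation on $\{1_{\calA}\} \cup \{\scal{S}{x}\}_{x \in X}$, contradicting $ii)$.

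The main step, $iv) \Rightarrow i)$, I would prove by a minimality argument. Assume toward a contradiction that some non-trivial $\calC$-linear relation $\sum_w c_w \scal{S}{w} = 0$ exists, and pick one minimizing lexicographically the pair $(n, N)$, where $n$ is the largest word length in the support and $N$ is the number of length-$n$ words in the support. After rescaling, fix a length-$n$ word $w^* = x^* v^*$ with $c_{w^*} = 1$. Applying $\dd$ and using $d(\scal{S}{xv}) = u_x \scal{S}{v}$, one obtains a new $\calC$-linear relation whose coefficient at any length-$n$ word $w$ is $d(c_w)$ and whose coefficient at any length-$(n-1)$ word $w'$ is $d(c_{w'}) + \sum_x c_{xw'} u_x$. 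Because $d(c_{w^*}) = 0$, the new relation has strictly fewer length-$n$ terms, so by minimality it must be trivial. Triviality forces $c_w \in k$ for every $|w| = n$ and, at level $n-1$, yields $\sum_x c_{xw'} u_x = -d(c_{w'}) \in d(\calC) \cap \span_k((u_x)_{x \in X}) = \{0\}$ by $iv)$. The $k$-linear independence of $(u_x)$ then kills every $c_{xw'}$ with $|w'| = n - 1$; taking $w' = v^*$ gives $c_{w^*} = 0$, contradicting $c_{w^*} = 1$. The main obstacle is pinning down the right invariant to minimize: normalizing one top-length coefficient to $1$ is what makes $\dd$ a strict reducer of the length-$n$ support, which in turn is exactly what is needed to apply both halves of $iv)$ simultaneously at the residual $(n-1)$-level.
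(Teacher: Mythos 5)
Your proposal is correct and follows essentially the same route as the paper: the easy implications and (iii)$\Leftrightarrow$(iv) are handled the same way, your contrapositive version of (ii)$\Rightarrow$(iii) rests on the same facts ($d(\scal{S}{x})=u_x$ and $\ker d=k$) as the paper's construction of the polynomial $P=-f1_{X^*}+\sum_x\alpha_x x$, and your main step is the paper's argument of differentiating a minimal nontrivial $\calC$-relation so that the top-length coefficients become constants and the independence condition then kills them at the level just below, contradicting the normalization. The only variation is bookkeeping: you minimize the pair (maximal length, number of maximal-length words) rather than a minimal graded-lexicographic leading monomial, which avoids the well-ordering of $X$ the paper invokes but changes nothing essential.
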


\begin{proof} For convenience of the reader, we enclose here the demonstration given in Th 1 \cite{Linz}.
\\
(i)$\Longrightarrow$(ii) Obvious.\\
(ii)$\Longrightarrow$(iii)\\
Suppose that the family $(\scal{S}{y})_{y\in X\cup \{1_{X^*}\}}$ (coefficients taken at letters and the empty word) of coefficients of $S$ were free over $\calC$ and let us consider the relation as in \eqref{prim_ind}
\begin{equation}
d(f)=\sum_{x\in X} \alpha_x u_x\ .
\end{equation}
We form the polynomial $P=-f1_{X^*}+\sum_{x\in X}\alpha_x x$. One has $\dd(P)=-d(f)1_{X^*}$ and
\begin{equation}
	d(\scal{S}{P})=\scal{\dd(S)}{P}+\scal{S}{\dd(P)}=\scal{MS}{P}-d(f)\scal{S}{1_{X^*}}=
	(\sum_{x\in X} \alpha_x u_x)-d(f)=0
\end{equation}
whence  $\scal{S}{P}$ must be a constant, say $\lambda\in k$. For $Q=P-\lambda.1_{X^*}$, we have
$$\supp(Q)\subset X\cup \{1_{X^*}\}\ \textrm{and}\ \scal{S}{Q}=\scal{S}{P}-\lambda\scal{S}{1_{X^*}}=\scal{S}{P}-\lambda=0
\ .
$$
This, in view of (ii), implies that $Q=0$ and, as $Q=-(f+\lambda)1_{X^*}+\sum_{x\in X}\alpha_x x$, one has, in particular, $supp(\alpha)=\emptyset$ (and, as a byproduct, $f=-\lambda$ which is indeed the only possibility for the L.H.S. of \eqref{prim_ind} to occur).\\
(iii)$\Longleftrightarrow$(iv)\\
Obvious, (iv) being a geometric reformulation of (iii).\\
(iii)$\Longrightarrow$(i)\\
Let $\mathcal{K}$ be the kernel of $P\mapsto \scal{S}{P}$ (linear $\ncp{\calC}{X}\ra \calA$) i.e.
\begin{equation}
	\mathcal{K}=\{P\in \ncp{\calC}{X}| \scal{S}{P}=0\}\ .
\end{equation}
If $\mathcal{K}=\{0\}$, we are done. Otherwise, let us adopt the following strategy.\\
First, we order $X$ by some well-ordering $<$ (\cite{B_Sets} III.2.1) and $X^*$ by the graded lexicographic ordering $\prec$ defined as follows
\begin{equation}
	u\prec v \Longleftrightarrow |u|<|v|\ \textrm{or}\ (u=pxs_1\ ,\ v=pys_2\ \textrm{and}\ x<y).
\end{equation}
It is easy to check that $X^*$ is also a well-ordered by $\prec$. For each nonzero polynomial $P$, we denote by $lead(P)$ its leading monomial; i.e. the greatest element of its support $\supp(P)$ (for $\prec$).\\
Now, as $\mathcal{R}=\mathcal{K}\rsetminus \{0\}$ is not empty, let $w_0$ be the minimal element of $lead(\mathcal{R})$ and choose a $P\in \mathcal{R}$ such that $lead(P)=w_0$. We write
\begin{equation}
	P=fw_0+\sum_{u\prec w_0}\scal{P}{u}u\ ;\ f\in \calC\rsetminus\{0\}\ .
\end{equation}
The polynomial $Q=\frac{1}{f}P$ is also in $\mathcal{R}$ with the same leading monomial, but the leading coefficient is now $1$; and so $Q$ is given by
\begin{equation}
	Q=w_0+\sum_{u\prec w_0}\scal{Q}{u}u\ .	
\end{equation}

Differentiating $\scal{S}{Q}=0$, one gets
\begin{eqnarray}
&&	0=\scal{\dd(S)}{Q}+\scal{S}{\dd(Q)}=\scal{MS}{Q}+\scal{S}{\dd(Q)}=\cr
&&	\scal{S}{M^\dagger Q}+\scal{S}{\dd(Q)}=\scal{S}{M^\dagger Q+\dd(Q)}
\end{eqnarray}
with
\begin{equation}
M^\dagger Q+\dd(Q)=\sum_{x\in X} u_x (x^\dagger Q)+\sum_{u\prec w_0}d(\scal{Q}{u})u\in \ncp{\calC}{X}	\ .
\end{equation}
It is impossible that $M^\dagger Q+\dd(Q)\in \mathcal{R}$ because it would be of leading monomial strictly less than $w_0$, hence $M^\dagger Q+\dd(Q)=0$. This is equivalent to the recursion
\begin{equation}
	d(\scal{Q}{u})=-\sum_{x\in X} u_x \scal{Q}{xu}\ ;\ \textrm{for}\ x\in X\ ,\ v\in X^*   .
\end{equation}
From this last relation, we deduce that $\scal{Q}{w}\in k$ for every $w$ of length $deg(Q)$ and,
because $\scal{S}{1_{X^{*}}}=1_\calA$, one must have $deg(Q)>0$. Then, we write $w_0=x_0v$ and compute the coefficient at $v$
\begin{equation}
d(\scal{Q}{v})=-\sum_{x\in X} u_x \scal{Q}{xv}=\sum_{x\in X} \alpha_x u_x
\end{equation}
with coefficients $\alpha_x=-\scal{Q}{xv}\in k$ as $|xv|=\deg(Q)$ for all $x\in X$.
Condition \eqref{prim_ind} implies that all coefficients $\scal{Q}{xv}$ are zero; in particular, as
$\scal{Q}{x_0v}=1$, we get a contradiction. This proves that $\mathcal{K}=\{0\}$.\\
\end{proof}
\section{Localization}
We will now establish the following extension of Theorem 1 in \cite{Linz}.
Let $(\calA,d)$ be a $k$-commutative associative differential algebra with unit ($ker(d)=k$, a field). We consider a solution of the differential equation
\begin{equation}\label{NCDE_abs}
  \dd(S)=MS\ ;\ \scal{S}{1_{X^{*}}}=1_{\calA}
\end{equation}
where the multiplier $M$ is a homogeneous series (a polynomial in the case of finite $X$) 
of degree $1$, i.e.
\begin{equation}
M=\sum_{x\in X}u_x x\in \ncs{\calA}{X}\ .
\end{equation} 
\begin{proposition}[Thm1 in \cite{Linz}, Localized form]\label{ind_lin2} Let $(\calA,d)$ be a commutative associative differential ring ($ker(d)=k$ being a field) and $\calC$ be a differential subring (i.e. $d(\calC)\subset \calC$) of $\calA$ which is an integral domain containing the field of constants.
\\
We suppose that, for all $x\in X$, $u_x\in \calC$ and that $S\in \ncs{\calA}{X}$ is a solution of the differential equation \eqref{NCDE_abs} and that $(u_x)_{x\in X}\in \calC^{X}$.\\
The following conditions are equivalent :
\begin{enumerate}[i)]
\item The family $(\scal{S}{w})_{w\in X^*}$ of coefficients of $S$ is free over $\calC$.
\item The family of coefficients $(\scal{S}{y})_{y\in X\cup \{1_{X^*}\}}$ is free over $\calC$.
\item[iii')] For all $f_1,f_2\in \calC,\ f_2\not=0$ and $\alpha\in k^{(X)}$, we have the property 
\begin{equation}
W(f_1,f_2)=f_2^2(\sum_{x\in X} \alpha_x u_x) \Longrightarrow (\forall x\in X)(\alpha_x=0)\ .
\end{equation}
where $W(f_1,f_2)$, the wronskian, stands for $d(f_1)f_2-f_1d(f_2)$. 
\end{enumerate}
\end{proposition}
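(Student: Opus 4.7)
The plan is to mimic the proof of Theorem~\ref{ind_lin}, passing to the fraction field $\calK := \mathrm{Frac}(\calC)$ on which $d$ extends by the quotient rule, so that $d(f_1/f_2) = W(f_1,f_2)/f_2^2$. Under this extension, (iii') is exactly the transcription of the original condition (iii) to the field $\calK$: for all $g\in\calK$ and $\alpha\in k^{(X)}$, $d(g)=\sum_x\alpha_x u_x$ forces $\alpha=0$. The implication (i) $\Rightarrow$ (ii) being trivial, it suffices to establish (iii') $\Rightarrow$ (i) and (ii) $\Rightarrow$ (iii'), both by arguments modelled on the original proof but adapted to the fact that a polynomial in $\ncp{\calC}{X}$ can no longer be normalized to be monic.

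For (iii') $\Rightarrow$ (i), let $\mathcal{K}:=\{P\in\ncp{\calC}{X}:\scal{S}{P}=0\}$ and assume toward contradiction that $\mathcal{K}\neq\{0\}$. Choose $P\in\mathcal{K}\setminus\{0\}$ of minimal leading monomial $w_0$ (in the graded lex order on $X^*$), and let $f\in\calC\setminus\{0\}$ be its leading coefficient. Differentiating $\scal{S}{P}=0$ shows $P_1:=M^\dagger P+\dd P\in\mathcal{K}$; since $M^\dagger$ strictly decreases length, $\scal{P_1}{w_0}=d(f)$, so $fP_1-d(f)P\in\mathcal{K}$ has leading monomial strictly below $w_0$ and, by minimality, vanishes identically. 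Expanding this relation coefficient by coefficient produces the Wronskian recursion
\begin{equation}\label{wron-loc}
W(\scal{P}{u},f)\;=\;-f\sum_{x\in X}u_x\,\scal{P}{xu},\qquad u\in X^*.
\end{equation}
At maximal length $|u|=\deg P$ the right-hand side vanishes, so $d(\scal{P}{u}/f)=0$ in $\calK$; under the standing hypothesis that the constants of $\calK$ coincide with $k$, this yields $\scal{P}{u}=c_u f$ with $c_u\in k$ and $c_{w_0}=1$. Writing $w_0=x_0 v$ and specializing \eqref{wron-loc} to $u=v$ gives $W(\scal{P}{v},f)=f^2\sum_x(-c_{xv})u_x$, and (iii') forces $c_{xv}=0$ for all $x$, contradicting $c_{x_0v}=1$.

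For (ii) $\Rightarrow$ (iii'), start from a putative counterexample $W(f_1,f_2)=f_2^2\sum_x\alpha_x u_x$ with $\alpha\in k^{(X)}\setminus\{0\}$ and form the polynomial $P:=-f_1 f_2\cdot 1_{X^*}+f_2^2\sum_{x\in X}\alpha_x\,x\in\ncp{\calC}{X}$. Using $\dd(S)=MS$, $d(\scal{S}{x})=u_x$, and the Wronskian hypothesis, a direct computation yields $f_2\,d(\scal{S}{P})=2\,d(f_2)\,\scal{S}{P}$, i.e.\ $d(\scal{S}{P}/f_2^2)=0$ in $\calK$, so $\scal{S}{P}=cf_2^2$ for some constant $c$; again under the coincidence of constants, $c\in k$, and $P-cf_2^2\cdot 1_{X^*}\in\ncp{\calC}{X}$ is a nonzero polynomial (since $f_2^2\alpha_x\neq 0$ for some $x$) annihilated by $\scal{S}{\cdot}$, contradicting~(ii). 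The main obstacle, common to both directions, is the identification of the field of constants of $\calK$---or equivalently of the ambient localization $\calA_\calC:=(\calC\setminus\{0\})^{-1}\calA$---with $k=\ker(d|_\calA)$: this is the only genuinely new input required beyond the original argument and is automatic in the function-algebra contexts targeted by the applications.
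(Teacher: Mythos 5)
Your proposal is correct in outline but takes a genuinely different route from the paper. The paper does not re-run the induction at all: it constructs the localized differential algebra $\calA[(\calC^\times)^{-1}]$ together with $Fr(\calC)$ (the commutative cube), transports $S$ and $M$ into it, and then invokes Theorem~\ref{ind_lin} verbatim for the data $(\bar S,\calA[(\calC^\times)^{-1}],Fr(\calC),d_{frac},X)$, the only new work being the translation of (iii') into condition (iii) over $Fr(\calC)$, the kernel description of $\varphi_{\calA}$, its injectivity on $\calC$, and the pull-back/push-forward of freeness along $\varphi_{\calA}$. You instead redo the minimal-leading-monomial argument inside $\ncp{\calC}{X}$, replacing the monic normalization (unavailable over a ring) by the combination $fP_1-d(f)P$ and the resulting Wronskian recursion, and you prove (ii)$\Rightarrow$(iii') by the explicit polynomial $-f_1f_2\,1_{X^*}+f_2^2\sum_{x\in X}\alpha_x x$; both computations are correct ring-level substitutes for the paper's field-level steps. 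What your version buys is self-containedness and the fact that only scalars get localized; what the paper's version buys is brevity, since Theorem~\ref{ind_lin} is reused as a black box and all localization bookkeeping is isolated in the cube.

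The one substantive caveat is your ``standing hypothesis'' that the constants of $\calK=Fr(\calC)$ (for (iii')$\Rightarrow$(i)) and of the ambient localization $(\calC\setminus\{0\})^{-1}\calA$ (for (ii)$\Rightarrow$(iii'); note in passing that $\scal{S}{P}/f_2^2$ lives there, not in $\calK$) coincide with $k$. This is not among the hypotheses of Proposition~\ref{ind_lin2} and does not follow from $\ker(d|_{\calA})=k$: for instance $\calA=\calC=k[x,y]$ with $d=x\partial_x+y\partial_y$ in characteristic $0$ has $\ker d=k$, yet $x/y$ is a new constant of $Fr(\calC)$. So, strictly speaking, your argument proves the equivalence only under this extra assumption, which does hold in the function-algebra settings the paper targets (where $Fr(\calC)$ embeds into meromorphic functions on a connected domain). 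You are, however, in exactly the same position as the paper: its application of Theorem~\ref{ind_lin} over $Fr(\calC)\subset\calA[(\calC^\times)^{-1}]$ tacitly requires $\ker(d_{frac})=k$ (this is precisely where ``$\lambda\in k$'' and ``$\scal{Q}{w}\in k$'' are used in the quoted proof), and this hypothesis is nowhere verified for the localization. Your proposal therefore matches the paper's proof in substance and has the merit of making the missing input explicit; a proof of the statement exactly as written would need either this hypothesis added or an argument controlling the constants created by localization.
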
 
\begin{proof} (i.$\Lra$ ii.) being trivial, remains to prove (ii.$\Lra$ iii'.) and (iii'.$\Lra$ i.). To this end, we localize the situation w.r.t. the multiplicative subset $\calC^\times:= \calC\rsetminus \{0\}$ as can be seen in the following commutative cube
\begin{center}
\begin{equation}
\begin{tikzcd}[row sep=2.5em]
\calC \arrow[rr,hook,"\varphi_{\calC}"] \arrow[dr,hook,swap,"j"] \arrow[dd,swap,"d"] 
&&
Fr(\calC) \arrow[dd,"d_{frac}" near start] \arrow[dr,hook,"j_{frac}"] 
\\
& \calA \arrow[rr,crossing over,"\varphi_{\calA}" near start] &&
\calA[(\calC^\times)^{-1}] \arrow[dd,"d_{frac}"] \\
\calC \arrow[rr,hook,"\varphi_{\calC}" near end] \arrow[dr,hook,swap,"j"] 
&& Fr(\calC) \arrow[dr,hook,swap,"j_{frac}"] \\
& \calA \arrow[rr,"\varphi_{\calA}"] \arrow[uu,<-,crossing over,"d" near end]&& \calA[(\calC^\times)^{-1}]
\end{tikzcd}
\end{equation}
\end{center}
We give here a detailed demonstration of the commutation which provides, in passing, the labelling of the arrows.\\
\textbf{Left face}\pointir Comes from the fact that $d(\calC)\subset \calC$, $j$ being the canonical embedding.\\
\textbf{Upper and lower faces}\pointir We first construct the localization\\ $\varphi_{\calA}: \calA\lra\calA[(\calC^\times)^{-1}]$ w.r.t. the multiplicative subset $\calC^\times\subset \calA\rsetminus \{0\}$ (recall that $\calC$ has no zero divisor). Now, from standard theorems (see \cite{CA}, ch2 par. 2 remark 3 after Def. 2, for instance), we have 
\begin{equation}\label{ker1}
\ker(\varphi_{\calA})=\{u\in \calA|(\exists v\in \calC^\times)(uv=0)\}
\end{equation}
For every intermediate ring $\calC\subset \calB\subset\calA$, we remark that the composittion 
$$
\calB\hookrightarrow \calA\stackrel{\varphi_{\calA}}{\lra} \calA[(\calC^\times)^{-1}]
$$
realises the ring of fractions $\calB[(\calC^\times)^{-1}]$ which can be identified with the subalgebra generated by $\varphi_{\calA}(\calB)$ and the set of inverses $\varphi_{\calA}(\calC^\times)^{-1}$. Applying this to $\calC$, and remarking that $\calC[(\calC^\times)^{-1}]\simeq Fr(\calC)$, we get the embedding $j_{frac}$ and the commutation of upper and lower faces.\\  
\textbf{Front and rear faces}\pointir From standard constructions (see e.g. the book \cite{VdP}), there exists a unique $d_{frac}\in \mathfrak{Der}(\calA[\calC^\times)^{-1}])$ such that these faces commute.\\ 
\textbf{Right face}\pointir Commutation comes from the fact that $d_{frac}j_{frac}$ and 
$j_{frac}d_{frac}$ coincide on $\varphi_{\calC}(\calC)$ hence on  $\varphi_{\calC}(\calC^\times)$ and on their inverses. Therefore on all $Fr(\calC)$.\\    
From the constructions it follows that the arrows (derivations, morphisms) are arrows of $k$-algebras.

\smallskip
Now, we set 
\begin{enumerate}
\item $\bar{S}=\sum_{w\in X^*}\varphi_{\calA}(\scal{S}{w})w\in \ncs{\calA[\calC^\times)^{-1}]}{X}$
\item $\bar{M}=\sum_{x\in X}\varphi_{\calC}(u_x)\,x\in \ncs{\calA[\calC^\times)^{-1}]}{X}$
\end{enumerate}
it is clear, from the commutations, that $(\ncs{\calA[\calC^\times)^{-1}]}{X},\dd_{frac})$ where 
$\dd_{frac}$ is the extension of $d_{frac}$ to the series, is a differential algebra and that 
\begin{equation}
\dd_{frac}(\bar{S})=\bar{M}\bar{S}\ ;\ \scal{\bar{S}}{1}=1
\end{equation}
we are now in the position to resume the proofs of (ii.$\Lra$ iii'.) and (iii'.$\Lra$ i.).\\
ii.$\Lra$ iii'.) Supposing (ii), we remark that the family of coefficients 
$$
(\scal{\bar{S}}{y})_{y\in X\cup \{1_{X^*}\}}
$$ 
is free over $\calC$\footnote{As $\varphi_{\calC}$ is injective on $\calC$ we identify 
$\varphi_{\calC}(\calC)$ and $\calC$, this can be unfolded on request, of course.}.
Indeed, let us suppose a relation 
\begin{equation}\label{lin_rel1}
\sum_{y\in X\cup \{1_{X^*}\}} g_y\,\scal{\bar{S}}{y}=0\mbox{ with } (g_y)_{y\in X\cup \{1_{X^*}\}}
\in \calC^{(X\cup \{1_{X^*}\})}
\end{equation} 
this relation is equivalent to 
\begin{equation}\label{lin_rel2}
\varphi_{\calA}(\sum_{y\in X\cup \{1_{X^*}\}} g_y\,\scal{S}{y})=0
\end{equation} 
which, in view of \eqref{ker1}, amounts to the existence of $v\in \calC^\times$ such that 
\begin{equation}\label{lin_rel3}
0=v(\sum_{y\in X\cup \{1_{X^*}\}} g_y\,\scal{S}{y})=\sum_{y\in X\cup \{1_{X^*}\}} vg_y\,\scal{S}{y}
\end{equation} 
which implies $(\forall y\in X\cup \{1_{X^*}\})(vg_y=0)$ but, $\calC$ being without zero divisor, one gets
\begin{equation}
(\forall y\in X\cup \{1_{X^*}\})(g_y=0)
\end{equation}  
which proves the claim. This implies in particular, by chasing denominators, that the family of coefficients 
$$
(\scal{\bar{S}}{y})_{y\in X\cup \{1_{X^*}\}}
$$ 
is free over $Fr(\calC)$. This also implies\footnote{And indeed is equivalent under the assumption of (ii).} that $\varphi_{\calA}$ is injective on  
\begin{equation}\label{span1}
span_{\calC}(\scal{S}{y})_{y\in X\cup \{1_{X^*}\}}
\end{equation} 
To finish the proof that (ii.$\Lra$ iii'.), let us choose $f_1,f_2\in \calC$ with $f_2\not=0$ and set some relation which reads
\begin{equation}\label{lin_rel4}
W(f_1,f_2)=f_2^2(\sum_{x\in X} \alpha_x u_x)
\end{equation} 
with $\alpha\in k^{(X)}$, then 
\begin{equation}\label{lin_rel5}
(\sum_{x\in X} \alpha_x u_x)=\frac{W(f_1,f_2)}{f_2^2}=d_{frac}(\frac{f_1}{f_2})
\end{equation} 
but, in view of Th1 in \cite{Linz} applied to the differential field $Fr(\calC)$, we get $\alpha\equiv 0$.
\\
(iii'.$\Lra$ i.) The series $\bar{S}$ satisfies 
\begin{equation}\label{NCDE_abs_frac}
	\dd(\bar{S})=\bar{M}\bar{S}\ ;\ \scal{\bar{S}}{1_{X^*}}=1_{\calA[\calC^\times)^{-1}]}=1_{Fr(\calC)}
\end{equation}
and remarking that
\begin{enumerate}
\item all $f$ in the differential field $Fr(\calC)$ can be expressed as $f=\frac{f_1}{{f_2}}$ 
\item condition (iii') for $(S,\calA,\calC,d,X)$ implies condition (iii) for 
$(\bar{S},\calA[\calC^\times)^{-1}],Fr(\calC),d_{frac},X)$\footnote{Once again we identify, with no loss, 
$k\subset \calC$, the latter being idetified with its image through $\varphi_{\calA[\calC^\times)^{-1}]}$.} which, in turn, implies the $Fr(\calC)$-freeness of $(\scal{\bar{S}}{w})_{w\in X^*}$ hence its $\calC$-freeness and, by inverse image\footnote{If the image (through a $A$-linear arrow) of a family is $A$-free then the family itself is $A$-free.} the $\calC$-freeness of $(\scal{S}{w})_{w\in X^*}$.
\end{enumerate}
\end{proof}
\begin{remark} i) It seems reasonable to think that the whole commutation of the cube could be understood by natural transformations within an appropriate category. If yes, this will be inserted in a forthcoming version.\\
ii) In fact, in the localized form and with $\calC$ not a differential field, $(iii)$ is strictly weaker 
than $(iii')$, as shows the following family of counterexamples 
\begin{enumerate}
\item $\Omega =\C\rsetminus (]-\infty,0])$
\item $X=\{x_0\}$, $u_0=z^\beta,\ \beta\notin\Q$
\item $\calC_0=\dext{\C}{z^{\beta}}=\C.1_{\Omega}\oplus span_{\C}\{z^{(k+1)\beta-l}\}_{k,l\ge0}$
\item $S=1_{\Omega}+(\sum_{n\geq 1}\frac{z^{n(\beta+1)}}{(\beta+1)^nn!})$
\end{enumerate}
\end{remark}
\begin{appli} As a result of the theory of domains (see \cite{KSOP}), the conc-characters 
$(\alpha x_0)^*,(\beta x_1)^*$ are in the domain of $\Li_{\bullet}$ (see \cite{KSOP} for details), then due to the fact that 
$\calH(\Omega)$ is nuclear, their shuffle 
$(\alpha x_0)^*\shuffle (\beta x_1)^*=(\alpha x_0+\beta x_1)^*$ is also in $Dom(\Li_{\bullet})$. Let us compute 
\begin{equation}
\Li(\alpha x_0)^*\shuffle (\beta x_1)^*))=\Li(\alpha x_0)^*\Li(\beta x_1)^*)=z^{\alpha}(1-z)^{\beta}
\end{equation}
Now, for a family of functions $\calF=(f_{i})_{i\in I}$, let us note $\C\{f_i\}_{i\in I}$ the algebra generated by $\calF$ within $\calH(\Omega)$ and then set $\calC_{\C}:=\C\{z^{\alpha}(1-z)^{\beta}\}_{\alpha,\beta\in \C}$. We, at once, remark that, as $\calM=\{z^{\alpha}(1-z)^{\beta}\}_{\alpha,\beta\in \C}$ is a monoid, 
$$
\calC_{\C}=span_{\C}(z^{\alpha}(1-z)^{\beta})_{\alpha,\beta\in \C}=\C[z^{\alpha}(1-z)^{-\beta}]_{\alpha,\beta\in \C}
$$ 
as well.\\
In this aplication, we give a detailed proof that the family $(\Li_w)_{w\in X^*}$ is $\calC_{\C}$-linearly independent.\\
Let us suppose $P_i\in\calC_{\C},i=1\ldots 3$\footnote{i.e. elements of the algebra of the monoid $\calM=\{z^{\alpha}(1-z)^{\beta}\}_{\alpha,\beta\in \C}$  
$\{z^{\alpha}(1-z)^{\beta}\}_{\alpha,\beta\in \C}$} such that 
$$
P_1(z)+P_2(z)\log(z)+P_3(z)(\log(\frac{1}{1-z}))=0_{\Omega}
$$
We first prove that $P_2=\sum_{i\in F}c_i z^{\alpha_i}(1-z)^{\beta_i}$ is zero using the deck transformation $D_0$ of index 
one around zero.\\
One has 
$D_0^n(\sum_{i\in F}c_i z^{\alpha_i}(1-z)^{\beta_i})=\sum_{i\in F}c_i z^{\alpha_i}(1-z)^{\beta_i}e^{2i\pi.n\alpha_i}$, the same calculation holds for all $P_i$  which proves that all $D_0^n(P_i)$ are bounded. But one has 
$D_0^n(\log(z))=\log(z)+2i\pi.n$ and then
\vspace{-3mm} 
\begin{eqnarray*}
&&D_0^n(P_1(z)+P_2(z)\log(z)+P_3(z)(\log(\frac{1}{1-z})))=\cr
&&D_0^n(P_1(z))+D_0^n(P_2(z))(\log(z)+2i\pi.n)+
D_0^n(P_3(z))\log(\frac{1}{1-z})=0
\end{eqnarray*}
It suffices to build a sequence of integers $n_j\to +\infty$ such that $\lim_{j\to\infty}D_0^{n_j}(P_2(z))=P_2(z)$ which is a consequence of the following lemma. 
\begin{lemma}{}
Let us consider a homomorphism $\varphi : \N\to G$ where $G$ is a compact (Hausdorff) group, then it exists $u_j\to +\infty$ such that 
$$
\lim_{j\to \infty}\varphi(u_j)=e
$$
\end{lemma}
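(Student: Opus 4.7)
The plan is first to reduce the statement to a pigeonhole argument in the compact group $G$. Setting $g = \varphi(1)$, the (monoid) homomorphism satisfies $\varphi(n) = g^n$, so the goal is to build integers $u_j \to +\infty$ with $g^{u_j} \to e$. Equivalently, I would aim to show that for every neighborhood $U$ of $e$ the set $N(U):=\{n\in\N : g^n\in U\}$ is infinite, and then to diagonalize against a countable neighborhood basis of $e$.

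The core step is the infiniteness of $N(U)$. By compactness the orbit $(g^n)_{n\geq 1}$ admits an accumulation point $h \in G$. Given $U$, continuity of inversion and multiplication at $e$ provides a symmetric neighborhood $W$ of $e$ with $W^{-1}W \subset U$. Infinitely many integers $n$ satisfy $g^n \in hW$; fixing the smallest such $n_0$ and taking any larger such $n$, one gets
\[
g^{n-n_0}=(g^{n_0})^{-1}g^n\in W^{-1}h^{-1}hW=W^{-1}W\subset U,
\]
and the differences $n-n_0$ are unbounded. Hence $N(U)$ is infinite.

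To obtain a single sequence, I would exploit the fact that in the intended application the ambient group is a closed subgroup of a finite-dimensional torus, hence metrizable: choose a descending countable neighborhood basis $U_1 \supset U_2 \supset \cdots$ at $e$ and pick $u_j \in N(U_j)$ with $u_j > u_{j-1}+1$. Then $u_j \to +\infty$ and $\varphi(u_j) \to e$ by construction.

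I expect the principal subtlety, rather than a genuine obstacle, to lie in the fully general non-metrizable case, where accumulation points produce nets and not sequences. One classical way around this is to invoke the theorem of Numakura/Hewitt--Ross that the closed subsemigroup $S=\overline{\{g^n : n\geq 1\}}$ of the compact topological group $G$ is automatically a subgroup, whence $e \in S$; a standard net-to-sequence argument (or metrizability at $e$) then yields the conclusion. Since in the application $G$ is metrizable, this point can simply be acknowledged in a footnote.
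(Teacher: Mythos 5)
Your argument is correct and rests on the same basic mechanism as the paper's proof, but it is packaged differently. The paper extracts (using compactness sequentially) a convergent subsequence $\varphi(n_k)\to\ell$ and then sets $u_j=n_{k_{j+1}}-n_{k_j}$ with strictly increasing gaps, so that $\varphi(u_j)=\varphi(n_{k_j})^{-1}\varphi(n_{k_{j+1}})\to\ell^{-1}\ell=e$ directly, with no choice of neighborhood basis. You instead isolate the intermediate statement that every neighborhood $U$ of $e$ contains $g^{n}$ for an unbounded set of $n$ (via a cluster point $h$ and the same translation trick $g^{n-n_0}=(g^{n_0})^{-1}g^{n}$), and then diagonalize along a countable neighborhood basis of $e$. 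What your route buys is an explicit accounting of where first countability enters: the unboundedness of each $N(U)$, and likewise $e\in\overline{\{g^{n}:n\ge 1\}}$, hold in every compact Hausdorff group, whereas producing an actual sequence does not --- in the Bohr compactification of $\Z$ no sequence $u_j\to\infty$ satisfies $g^{u_j}\to e$, so your parenthetical hope for a ``standard net-to-sequence argument'' cannot work in full generality, and the metrizability (or first countability at $e$) hypothesis is the genuine escape. The paper's proof quietly makes the same assumption by treating compactness as sequential compactness. Since in the application the relevant group sits inside $\mathbb{U}^{F}$ with $F$ finite, hence is metrizable, both proofs are sound where they are used; the paper's is shorter, yours is more transparent about the hypothesis actually needed and could usefully be cited as the justification for the footnote you propose.
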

\begin{proof} First of all, due to the compactness of $G$, the sequence $\varphi(n)$ admits a subsequence $\varphi(n_k)$ convergent to some $\ell\in G$. Now one can refine the sequence as $n_{k_j}$ such that 
$$
0<n_{k_1}-n_{k_0}<\ldots <n_{k_{j+1}}-n_{k_j}<n_{k_{j+2}}-n_{k_{j+1}}<\ldots
$$
With $u_j=n_{k_{j+1}}-n_{k_j}$ one has $\lim_{j\to \infty}\varphi(u_j)=e$.\\
\textbf{End of the proof} One applies the lemma to the morphism
\vspace{-2mm}
$$
n\mapsto (e^{2i\pi.n\alpha_i})_{i\in F}\in \mathbb{U}^F
$$  
\end{proof}    
\end{appli}
\section{Closed subgroup property and algebraic independance.}
\section{Appendix.}
\subsection{Closed subgroup (Cartan) theorem in Banach algebras}
\tcr{This section is meant to be withdrawn afterwards\footnote{Following the advise of Gauss ``no
self-respecting architect leaves the scaffolding in place after completing the building''.} and integrated within the introduction.}\\
Let $\calB$ be a Banach algebra (with unit $e$) and $G$ a closed subgroup of the open set $\calB^{\times}$. 
By a path ``drawn on $T$'' ($T\subset \calB$) is understood any function $\varphi:I\to T$ where $I$ is a open real interval.
\\ 
The first step is to establish what would be seen as the Lie algebra of $G$.\\ 
Let $L=L(G)$ be the set of tangent vectors of $G$ at the origin i.e.
\begin{equation}
L(G)=\{\gamma\,'(0)\ |\ \gamma:I\to G \mbox{ is differentiable at $0\ (=0_\R)$ and $\gamma(0)=e$}\}
\end{equation} 
\begin{proposition} The set $L(G)$ has the following properties
\begin{enumerate}[1)]
\item\label{p1} If $u\in L(G)$, then, the one-parameter group $t\mapsto e^{t\cdot u}$ is drawn on $G$
\item\label{p2} $L(G)$ is a closed Lie subalgebra of $\calB$
\item\label{p3} Let $g\in \calB$ s.t. $\Nm{g-e}<1$, then 
$$
\log(g)\in L(G) \Longrightarrow g\in G
$$
\end{enumerate}
\end{proposition}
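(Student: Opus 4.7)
The plan is to follow the classical closed-subgroup theorem of Cartan, adapted to the Banach-algebra setting. The argument pivots on Part~\ref{p1}, from which Parts~\ref{p2} and~\ref{p3} follow with relatively little additional work.

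For Part~\ref{p1}, I would start from a path $\gamma : I \to G$ differentiable at $0$ with $\gamma(0)=e$ and $\gamma'(0)=u$, so that $\gamma(s)=e+s u+r(s)$ with $\Nm{r(s)}=o(s)$. For fixed $t\in\R$ and $n$ large enough that $t/n\in I$, the element $\gamma(t/n)^n$ lies in $G$ since $G$ is a subgroup. I would then show $\gamma(t/n)^n \to e^{tu}$ in $\calB$ by combining the textbook limit $(e+(t/n)u)^n\to e^{tu}$ with the standard Banach-algebra telescoping estimate $\Nm{a^n-b^n}\le n\cdot M^{n-1}\Nm{a-b}$, applied to $a=\gamma(t/n)$ and $b=e+(t/n)u$; the factor $M^{n-1}$ stays bounded by $e^{|t|\Nm{u}}$ while $n\Nm{a-b}=n\cdot o(1/n)=o(1)$. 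Closedness of $G$ then delivers $e^{tu}\in G$ for all $t\in\R$.

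For Part~\ref{p2}, I would establish in turn that $L(G)$ is a linear subspace, is closed under the Lie bracket, and is norm-closed. Closure under scalar multiples follows from reparametrization, since $s\mapsto \gamma(\lambda s)$ realizes $\lambda u$. Additivity uses the Trotter product formula $e^{t(u+v)}=\lim_n\bigl(e^{(t/n)u}e^{(t/n)v}\bigr)^n$: each factor belongs to $G$ by Part~\ref{p1}, the limit does by closedness, and differentiating $t\mapsto e^{t(u+v)}$ at $0$ gives $u+v\in L(G)$. Closure under the Lie bracket proceeds analogously from the commutator formula $e^{[u,v]}=\lim_n\bigl(e^{u/n}e^{v/n}e^{-u/n}e^{-v/n}\bigr)^{n^2}$. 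For norm-closedness, if $u_k\to u$ with $u_k\in L(G)$, then $e^{tu_k}\to e^{tu}$ uniformly on compact $t$-intervals; each $e^{tu_k}\in G$, so $e^{tu}\in G$, and differentiation at $0$ yields $u\in L(G)$.

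For Part~\ref{p3}, the hypothesis $\Nm{g-e}<1$ makes $u:=\log(g)$ well defined by its power series, with $e^u=g$. If $u\in L(G)$, Part~\ref{p1} gives $e^{tu}\in G$ for every $t\in\R$; specializing at $t=1$ yields $g\in G$. The main technical obstacle is the Banach-algebra justification of Part~\ref{p1}: passing from the $n$-fold product $\gamma(t/n)^n$ of near-identity elements to the genuine exponential $e^{tu}$ requires careful uniform control of the $o(1/n)$ remainder. The Trotter and commutator product formulas used in Part~\ref{p2} rest on the same type of estimate and, while standard in the Banach-algebra literature, would need to be made explicit for a self-contained presentation.
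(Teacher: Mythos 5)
Your proposal is correct and follows the same global strategy as the paper (Part 1 is the pivot, Parts 2 and 3 are derived from it), but the technical route differs in two places. For Part 1 the paper does not compare $\gamma(t/n)^n$ with $(e+\tfrac{t}{n}u)^n$ by a telescoping estimate; it instead takes logarithms, showing $n\log\bigl(\gamma(\tfrac{t}{n})\bigr)\to t\cdot\gamma'(0)$ from the first-order expansion of $\log(e+h)$, and then exponentiates, handling $t\in[-1,1]$ first and general $t$ by writing $e^{t\cdot u}=\bigl(e^{\frac{t}{N}\cdot u}\bigr)^N$; both arguments are valid and of comparable difficulty. The more substantive divergence is in Part 2: the paper never invokes the Trotter or commutator limit formulas. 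It simply exhibits explicit paths already drawn on $G$ (using Part 1 and the group structure) and reads off their tangent vectors at $0$: the path $t\mapsto e^{t\cdot u}e^{t\cdot v}$ gives $u+v$, the path $t\mapsto e^{\alpha t\cdot u}$ gives $\alpha u$, and the reparametrized group commutator $t\mapsto e^{r(t)u}e^{r(t)v}e^{-r(t)u}e^{-r(t)v}$ with $r(t)=\sqrt{2t}$ (extended by inverses for $t\le 0$) gives the bracket. This keeps the argument self-contained: only differentiation of explicit products is needed, whereas your route imports the Lie product formulas $e^{t(u+v)}=\lim_n\bigl(e^{(t/n)u}e^{(t/n)v}\bigr)^n$ and its commutator analogue, which rest on exactly the kind of uniform estimates you flag as needing proof. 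Your versions of norm-closedness of $L(G)$ and of Part 3 coincide with the paper's.
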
  
\begin{proof} \ref{p1}) 
Let $\gamma$ be such a tangent path (differentiable at $0$ and s.t. $\gamma(0)=e$), then one can write 
\begin{equation}
\gamma(t)=e+t\cdot \gamma\,'(0)+t.\epsilon_1(t)\mbox{ with } \lim_{t\to 0}\epsilon_1(t)=0_\calB
\end{equation}
and then, $t\in [-1,1]$ being fixed, 
\begin{equation}
\gamma(\dfrac{t}{n})=e+\dfrac{t}{n}\cdot \gamma\,'(0)+\dfrac{t}{n}\cdot \epsilon_2(n)
\mbox{ with } \lim_{n\to \infty}\epsilon_2(n)=0_\calB
\end{equation}
now, for $\Nm{h}<1$, one has (in $\calB$) $\log(e+h)=h+\Nm{h}\cdot \epsilon_3(h)$ so, with\\ 
$h=\dfrac{t}{n}\cdot \gamma\,'(0)+\dfrac{t}{n}\cdot \epsilon_2(n)$, we get 
\begin{equation}
n\log(\gamma(\dfrac{t}{n}))=t\cdot (\gamma\,'(0)+\epsilon_2(n))+\Nm{t\cdot (\gamma\,'(0)+\epsilon_2(n))}\epsilon_3(h)
\end{equation}
hence $\lim_{n\to \infty}n\log(\gamma(\dfrac{t}{n}))=t\cdot \gamma\,'(0)$ and then 
\begin{equation}
\lim_{n\to \infty}\gamma(\dfrac{t}{n})^n=\lim_{n\to \infty}e^{n\cdot\log(\gamma(\dfrac{t}{n}))}=e^{t\cdot \gamma\,'(0)}
\end{equation} 
as $\gamma$ is drawn on $G$, each $\gamma(\dfrac{t}{n})^n$ belongs to $G$ and it is the same for the limit ($G$ is closed), then $e^{t\cdot \gamma\,'(0)}\in G$ for all $t\in [-1,1]$. For general $t\in \R$, one just has to use the archimedean property that, for some $N\in \N_{\ge1}$, $\dfrac{t}{N}\in [-1,1]$ and remark that 
$e^{t\cdot \gamma\,'(0)}=(e^{\dfrac{t}{N}\cdot \gamma\,'(0)})^N$.  
\\
\ref{p2}) To prove that $L(G)$ is a Lie subalgebra of $\calB$ it suffices to provide suitable paths. Let $u,v\in L(G)$, we have

\smallskip 
\begin{center}
\begin{tabular}{c|c}
Path & Tangent vector (at zero)\\
\hline
\hline
$e^{t\cdot u}e^{t\cdot v}$ & $u+v$\\
\hline
$e^{\alpha t\cdot u}$ & $\alpha u$\\
\hline
Let $\gamma_1(t)=e^{r(t)\cdot u}e^{r(t)\cdot v}e^{-r(t)\cdot u}e^{-r(t)\cdot v}$ &\\
with $r(t)=\sqrt{2\cdot t}$ for $t\ge0$  & $[u,v]$\\
$\gamma(t)=\gamma_1(t)$ for $t\ge0$ and\\
$=(\gamma_1(-t))^{-1}$ for $t\le0$\\
\hline
\end{tabular}
\end{center}

\smallskip 
Remains to show that $L(G)$ is closed. To see this, let us consider a sequence $(u_n)_{n\in \N}$ in $L(G)$ which converges 
to $u$. For every fixed $t\in \R$, $\lim_{n\to \infty}e^{t\cdot u_n}=e^{t\cdot u}$ because $exp:\ \calB\to\calB$ is continuous, then $t\to e^{t\cdot u}$ is drawn on $G$ and $u\in L(G)$.   

\ref{p3}) Set $u=\log(g)$. Now, as $u\in L(G)$, the one-parameter group 
$t\to e^{t\cdot u}$ is drawn on $G$ and $g=(e^{t\cdot u})\bigr\rvert_{t=1}$ (in 
the neighourhood $\Nm{g-e}<1$, we have $\exp(\log(g))=g$).
\end{proof} 
Now, we have an analogue of Cartan's theorem in the context of Banach algebras 
\begin{theorem}\label{cartan} Let $G\subset \calB$ be a closed subgroup of $\calB^{\times}$ and $L(G)$ as above.
Let $I\subset \R$ be a non-void open interval and $M: I\to L(G)$ to be a continuous path drawn on $L(G)$. 
Let $t_0\in I,\ g_0\in G$. Then 
\begin{enumerate}[1)]
\item\label{p2.1} The system 
\begin{equation}
\Sigma(t_0,M,g_0)\qquad \left\{
\begin{array}{ll}
\dfrac{d}{dt}(S(t))=M(t).S(t) &\mbox{   (NCDE)}\cr
S(t_0)=g_0 & \mbox{    (Init. Cond.)}
\end{array}
\right.
\end{equation}
admits a unique solution $S:\ I\to \calB$.
\item\label{p2.2} This solution $S$ is a path drawn on $G$.
\end{enumerate}
\end{theorem}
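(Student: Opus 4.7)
The plan is to treat the two items separately: (1) is a standard Cauchy--Lipschitz existence/uniqueness statement for a linear non-autonomous ODE in a Banach algebra, while (2) will be proved by approximating $S(t)$ by time-ordered products of exponentials that all lie in $G$.

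For item 1), since $M$ is continuous on $I$ and $g\mapsto M(t)g$ is linear (hence globally Lipschitz on every compact subinterval $J\subset I$), Picard iteration produces a unique $C^1$ solution on $J$; exhausting $I$ by such $J$ yields the global solution $S:I\to\calB$. Equivalently, $S(t)$ admits the Chen/Dyson expansion
\begin{equation*}
S(t)=\Big(\sum_{n\ge 0}\int_{t_0\le s_1\le\cdots\le s_n\le t}M(s_n)\cdots M(s_1)\,ds_1\cdots ds_n\Big)\,g_0,
\end{equation*}
convergent in $\calB$ because of the linear bound on the integrand.

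For item 2), I first reduce to the case $g_0=e$: if $\tilde S$ solves $\Sigma(t_0,M,e)$, then $S(t):=\tilde S(t)g_0$ solves $\Sigma(t_0,M,g_0)$ by direct differentiation and, since $g_0\in G$, one has $S(t)\in G$ precisely when $\tilde S(t)\in G$. Fix $t\in I$ with $t>t_0$ (the case $t<t_0$ is symmetric), partition $[t_0,t]$ evenly as $t_0<t_1<\cdots<t_n=t$ with step $h=(t-t_0)/n$, and form the time-ordered product
\begin{equation*}
P_n:=e^{hM(t_{n-1})}\,e^{hM(t_{n-2})}\cdots e^{hM(t_0)}.
\end{equation*}
By property 1) of the preceding proposition each factor $e^{hM(t_k)}$ lies in $G$ (since $M(t_k)\in L(G)$), and $G$ being a subgroup, $P_n\in G$. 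Once we show $P_n\to \tilde S(t)$ as $n\to\infty$, closedness of $G$ will force $\tilde S(t)\in G$ and hence $S(t)\in G$.

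The main analytic step, which I expect to be the principal obstacle, is the convergence $P_n\to\tilde S(t)$. My approach is to introduce the piecewise-constant multiplier $M_n(s):=M(t_k)$ for $s\in[t_k,t_{k+1})$ and to observe that the solution $\tilde S_n$ of $\frac{d}{ds}\tilde S_n=M_n\tilde S_n$ with $\tilde S_n(t_0)=e$ satisfies $\tilde S_n(t)=P_n$. Setting $R_n:=\tilde S-\tilde S_n$, one has $R_n'=MR_n+(M-M_n)\tilde S_n$ with $R_n(t_0)=0$; a Gr\"onwall argument, combined with the uniform continuity of $M$ on the compact interval $[t_0,t]$ and a uniform a priori bound $\sup_n\sup_{s\in[t_0,t]}\Nm{\tilde S_n(s)}<\infty$ (itself obtained from Gr\"onwall applied to $\tilde S_n$), yields $\Nm{R_n(t)}\le C\,\omega_M(h)\to 0$, where $\omega_M$ is the modulus of continuity of $M$. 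A conceivable alternative would be to apply property 3) of the proposition directly to each increment $\tilde S(t_{k+1})\tilde S(t_k)^{-1}$ via a careful log-expansion, but the product-integral route above looks cleaner because it packages all the group-theoretic content into the elementary observation that one-parameter subgroups generated by $L(G)$ already sit in $G$.
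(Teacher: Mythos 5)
Your proof is correct, but it follows a genuinely different route from the paper. The paper first produces only a \emph{local} solution of $\Sigma(0,m,e)$ by Picard iteration under smallness assumptions ($|t|<1$, $\Nm{m(t)}\le B<1$), proves that this local solution is drawn on $G$ by passing to $\Omega(t)=\log T(t)$ and invoking the Magnus expansion: the Magnus recursion keeps each $\Omega_n$ inside $L(G)$ (this uses property 2), that $L(G)$ is a \emph{closed Lie subalgebra}, since the iterates involve integrals of iterated brackets), so $\Omega(t)\in L(G)$ and $T(t)=e^{\Omega(t)}\in G$; it then assembles a global solution by shifting ($T(t)=R(t)e^{t\cdot m(t_0)}g$) and by a maximal-solution/continuation argument showing $J_{max}=I$. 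You instead get global existence and uniqueness at once from linearity of the equation (Picard/Gr\"onwall on compact subintervals), and you prove membership in $G$ by a product-integral (Euler/Lie--Trotter) approximation: the ordered products $P_n=e^{hM(t_{n-1})}\cdots e^{hM(t_0)}$ lie in $G$ because each factor is a point of a one-parameter group generated by an element of $L(G)$ (property 1) only) and $G$ is a subgroup, while $P_n\to \tilde S(t)$ follows from identifying $P_n$ with the solution for the piecewise-constant multiplier $M_n$ and a Gr\"onwall estimate driven by the modulus of continuity of $M$; closedness of $G$ concludes. What your route buys is economy of hypotheses and estimates: you never take a logarithm, so you avoid the convergence-radius and norm estimates the paper must control for the Magnus series (the points flagged \emph{est1}, \emph{est2}), you do not need property 2) or 3) of the Proposition, and the gluing of local solutions disappears because convergence of the product integral works on any compact subinterval. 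What the paper's route buys is structural information that your argument does not provide: a local lift $\Omega(t)\in L(G)$ of the solution into the Lie algebra, which ties in with the Magnus/Poincar\'e--Hausdorff material in the appendix. Two small points you should make explicit: the identification $\tilde S_n(t)=P_n$ is a one-line induction over the subintervals (each $\tilde S_n$ being only piecewise $C^1$, run Gr\"onwall on the integral form of the equation), and for $t<t_0$ the same product argument works with $h<0$ since $e^{hM(t_k)}$ is still a point of the one-parameter group.
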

\begin{proof}\ref{p2.1}-\ref{p2.2}) I sketch the proof below 
\begin{enumerate}[1)]
\item Let $J$ be an open real interval, $t_0\in J$ and $m\in C^0(J,L(G))$. 
In order to paste them together, we call ``local solution'' (of $\Sigma(t_0,m,g)$), 
a $C^1$ map $J\to G$ fulfilling the following system
\begin{equation}
\Sigma(t_0,m,g)\qquad \left\{
\begin{array}{ll}
\dfrac{d}{dt}(T(t))=m(t).T(t) &\mbox{   (NCDE)}\cr
T(t_0)=g & \mbox{    (Init. Cond.)}
\end{array}
\right.
\end{equation}
\item We first prove that there exists a local solution to any system $\Sigma(0,m,e)$ for\\ 
$t_0=0,\ g=e$. As 
$m$ is continuous, there is an open interval $J$, containing zero, in which $|t|<1$ and is such that $\Nm{m_1(t)}\leq B<1$. 
In these conditions, Picard's process 
\begin{equation}
\left\{
\begin{array}{l}
T_0=e\cr
T_{n+1}(t)=\int_{0}^t m(s).T_{n}(s)ds
\end{array}
\right.
\end{equation}
converges absolutely (in $C^0(J,\calB)$) to a function $T\in C^0(J,\calB)$ such that  
\begin{equation}
T(t)=e+\int_{0}^t m(s).T(s)ds
\end{equation}
this proves that, in fact, $T\in C^1(J,\calB)$. Remains to prove that $T$ is drawn on $G$. 
    
\item If $t$ and $B$ are sufficiently small \tcr{est1}, we have $\Nm{T(t)-e}<1$ and can compute $\Omega(t)=\log(T)$, which 
is $C^1$ and, by Magnus expansion (see below \ref{ME}), satisfies 
\begin{equation}
\Omega'(t)=\dfrac{ad_{\Omega}}{e^{ad_{\Omega}}-Id_{\calB}}[\Omega]
\end{equation} 
where, the symbol $\dfrac{ad_{\Omega}}{e^{ad_{\Omega}}-Id_{\calB}}$ denotes the substitution of $ad_{\Omega}$ in the series 
$\sum_{n\ge0}B_n\dfrac{z^n}{n!}$ ($B_n$ being the Bernouilli numbers) \tcr{est2}. But we know that $\Omega(t)$ is the limit of the following process
\begin{equation}
\left\{
\begin{array}{l}
\Omega_1(t)=\int_{0}^t m(s)ds\cr
\Omega_{n}(t)=\sum_{j=1}^{n-1}\dfrac{B_j}{j!}\sum_{k_1+\cdots +k_j=n-1\atop k_i\ge1}\int_{0}^t 
ad_{\Omega_{k_1}}\cdots ad_{\Omega_{k_j}}[m(s)]ds
\end{array}
\right.
\end{equation}
each $\Omega_n$ is drawn on $L(G)$ as shows the preceding recursion. Hence $T(t)=e^{\Omega(t)}$ is drawn on $G$. 
\item We can now shift the situation in order to compute a local solution of any system $\Sigma(t_0,m,g)$ as follows 
(given $J$ an open real interval, $t_0\in J$ and $m\in C^0(J,L(G))$ 
\begin{itemize}
\item Find a local solution $R$ of $\Sigma(0,m_1,e)$ with $m_1(t)=m(t+t_0)-m(t_0)$. For it 
\begin{equation}
R'(t)=(m(t+t_0)-m(t_0))R(t)\ ;\ T(0)=e
\end{equation}
\item\label{local1} Define $T(t):=R(t)e^{t\cdot m(t_0)}g$, one has
\begin{equation}
T'(t)=m(t)T(t)\ ;\ T(t_0)=g
\end{equation}
\end{itemize}
\item We now return to our original system $\Sigma(t_0,M,g_0)$. By the previous item \eqref{local1} we know that it admtis at least a local solution $(J,S)$. We remark also that if we have two local solutions $(J_1,S_1),\ (J_2,S_2)$ they coincide on 
$J_1\cap J_2$, thus th union of all graphs on local solutions of $\Sigma(t_0,M,g_0)$ is functional and is the maximal solution $(J_{max},S_{max})$ of $\Sigma(t_0,M,g_0)$. Now, if we had $b_m=sup(J_{max})<sup(I)$, we could consider the system 
$\Sigma(b_m,M,e)$ and a local solution $T$ of it on some $]a,b[$ with $a<b_m<b$, now taking some intermediate point $t_1$ within $]a,b_m[$, we observe that $TS_{max} T^{-1}(t_1)S_{max}(t_1)$ and $S_{max}$ coincide on $]a,b_m[$ and the union of their graphs would be a strict extension of $S_{max}$. A contradiction, then $b_m=sup(I)$. A similar reasoning proves that 
$a_m=inf(I)$ and then theorem \ref{cartan} is proved. 
\end{enumerate}
\end{proof}
\subsection{About Magnus expansion and Poincar\'e-Hausdorff formula}\label{ME}
\subsubsection*{Formal derivation of Poincar\'e-Hausdorff and Magnus formulas}
Let $(\C\{X\},\partial)$ be the differential algebra freely generated by $X$ (a formal variable)\footnote{It is, in fact, 
the free algebra $\ncp{\C}{(X^{[k]})_{k\ge0}}$ (with $X^{[0]}=X$) endowed with $\partial(X^{[k]})=X^{[k+1]}$, the construction is similar to what is to be found in \cite{VdP}, but in the noncommutative realm.}.  $X$ be a formal variable and $(\C{X},\partial)$ be 
We define a comultiplication $\Delta$ by asking that all $X^{[k]}$ be primitive note that $\Delta$ commutes with the derivation. Setting, in $\widehat{\C\{X\}}$,  $D=\partial(e^X)e^{-X}$, direct computation shows that $D$ is primitive and hence a Lie series, which can therefore be written as a sum of Dynkin trees.\\
On the other hand, the formula
\begin{equation}\label{formal1}
D=\sum_{k\ge1}\dfrac{1}{k!}\sum_{l=0}^{k-1}X^l(\partial X)X^{k-1-l}\cdot \sum_{n\ge0}\dfrac{(-X)^n}{n!}
\end{equation} 
suggests that all bedegrees (in $X,\partial X$) are of the form $[n,1]$ and thus, there exists an univariate series 
$\Phi(Y)=\sum_{n\ge0}a_nY^n$ such that $D=\Phi(ad_X)[\partial X]$.
Using left and right multiplications by $X$ (resp. noted $g,d$), we can rewrite \eqref{formal1} as
\begin{equation}
D=\Big(\sum_{k\ge1}\dfrac{1}{k!}\sum_{l=0}^{k-1}g^ld^{k-1-l}[\partial X]\Big)e^{-X}
\end{equation}
but, from the fact that $g,d$ commute, the inner sum $\sum_{l=0}^{k-1}g^ld^{k-1-l}$ is ruled out by the the following 
identity (in $\C[Y,Z]$, but computed within $\C(Y,Z)$)
\begin{equation}\label{formal2}
\sum_{l=0}^{k-1}Y^lZ^{k-1-l}=\dfrac{Y^k-Z^k}{Y-Z}=\dfrac{\big((Y-Z)+Z\big)^k-Z^k}{Y-Z}=
\sum_{j=1}^{k}\binom{k}{j}(Y-Z)^{j}Z^{k-j}
\end{equation}
Taking notice that $(g-d)=ad_X$ and pluging \eqref{formal2} into \eqref{formal1}, one gets 
\begin{eqnarray}
&&D=\Big(\sum_{k\ge1}\dfrac{1}{k!}\sum_{j=1}^{k}\binom{k}{j}(ad_X)^{j-1}d^{k-j}[\partial X]\Big)e^{-X}=\cr
&&\dfrac{1}{ad_X}\Big(\sum_{k\ge1}\sum_{j=1}^{k}\dfrac{1}{j!(r-j)!}(ad_X)^{j}d^{k-j}[\partial X]\Big)e^{-X}=
\dfrac{e^{ad_X}-1}{ad_X}[X']
\end{eqnarray}
which is Poincar\'e-Hausdorff formula.
\subsubsection*{Application}
Let $G$ be a Lie group with Lie algebra $L$. 
Let $X(t)$ be a $C^1$ path drawn within $L$ (setting as above i.e. $X(0)=0$), then 
\begin{equation}
(e^{X(t)})'e^{-X(t)}=\dfrac{e^{ad_X}-1}{ad_X}[X'(t)]
\end{equation}
In particular, if $S(t)$ is a solution of the system
\begin{equation*}
\Sigma(0,M,e)\qquad \left\{
\begin{array}{ll}
\dfrac{d}{dt}(S(t))=M(t).S(t) &\mbox{   (NCDE)}\cr
S(0)=e & \mbox{    (Init. Cond.)}
\end{array}
\right.
\end{equation*}
then $\Omega=\log(S)$, at a neighbourhood of $t=0$\footnote{Such that the norm of $ad_\Omega$ for the topology of bounded convergence be strictly less that $2\pi$ (the radius of convergence of $\phi(z)=\dfrac{z}{e^{z}-1}$) for which, it is sufficient that 
$\Nm{\Omega}<\pi$.} must satisfy $\Omega'=\dfrac{ad_\Omega}{e^{ad_\Omega}-1}[M]$ (this identity\footnote{The fraction 
$$
\dfrac{ad_X}{e^{ad_X}-1}
$$
means, of course, $\phi(\ad_X)$ where 
$$
\phi(t)=\dfrac{t}{e^{t}-1}=\sum_{n\ge0}B_n\dfrac{t^n}{n!}
$$
the family $(B_n)_{n\ge0}$ being that of Bernouilli numbers.
} 
lives in the completion of the non-commutative free differential algebra generated by the single $X$, constructed like in \cite{VdP} but non 
commutative\footnote{In fact, this (highly noncommutative) differential algebra $(\ncp{\C^{diff}}{X},\partial)$ may be realized as the free algebra 
$\ncp{\C}{(X^{[k]})_{k\ge0}}$ (with $X^{[0]}=X$) endowed with the derivation defined by $\partial(X^{[k]})=X^{[k+1]}$\\ (see e.g. \cite{B_Lie}, Ch I, par. 2.8 \textit{Extension of derivations}).}). 
This guarantees the existence of a local solution drawn on $G$.
\end{document}